\documentclass[letterpaper]{sig-alternate-per}

\usepackage{amsmath}
\usepackage{amsthm}
\usepackage{amssymb}
\usepackage{verbatim}
\usepackage{graphicx}

\usepackage{tikz}
\usetikzlibrary {positioning}
\usetikzlibrary{shapes,snakes}
\usetikzlibrary{arrows,calc}
\usepackage{relsize}
\usetikzlibrary{patterns}
 \usepackage{algorithm}
\usepackage{algpseudocode}

\theoremstyle{plain}
\newtheorem{lemma}{Lemma}
\newtheorem{rem}{Remark}
\newtheorem{theorem}{Theorem}

\theoremstyle{definition}

\floatname{algorithm}{Algorithm}

\allowdisplaybreaks[4]

\usepackage{amsfonts}
\usepackage{times}
\usepackage{latexsym}
\usepackage{amssymb}
\usepackage{amsmath}
\usepackage{cite}
\usepackage{verbatim}
%


\def\bb0{{\mathbb{0}}}


\def\bb{{\mathbf{b}}}

\def\b0{{\mathbf{0}}}


\def\b1{{\mathbf{1}}}


\def\bbE{{\mathbb{E}}}

\def\cA{\mathcal{A}}


\def\sfM{\mathsf{M}}


\def\sf0{{\mathsf{0}}}


\newcommand{\marceau}[1]{\color{black}#1\color{black}}

\begin{document}

\newlength{\figurewidth}\setlength{\figurewidth}{0.6\columnwidth}


\addtolength{\topmargin}{-0.5\baselineskip}
\addtolength{\textheight}{\baselineskip}

\title{\fontsize{23}{23}\selectfont Online Budgeted Truthful Matching}

\newcounter{one}
\setcounter{one}{1}
\newcounter{two}
\setcounter{two}{2}


\addtolength{\floatsep}{-\baselineskip}
\addtolength{\dblfloatsep}{-\baselineskip}
\addtolength{\textfloatsep}{-\baselineskip}
\addtolength{\dbltextfloatsep}{-\baselineskip}
\addtolength{\abovedisplayskip}{-1ex}
\addtolength{\belowdisplayskip}{-1ex}
\addtolength{\abovedisplayshortskip}{-1ex}
\addtolength{\belowdisplayshortskip}{-1ex}

\numberofauthors{2} 
%
\author{
%
%
\alignauthor
Rahul Vaze \\
       \affaddr{School of Technology and Computer Science,  Tata Institute of Fundamental Research, Mumbai, India}
\and
\alignauthor 
Marceau Coupechoux
 \\
 \affaddr{LTCI, CNRS, Telecom ParisTech, University Paris-Saclay, France}
}
\maketitle
\begin{abstract}
An online truthful budgeted matching problem is considered for a bipartite graph, where the right vertices are available ahead of time, and individual left vertices arrive sequentially. 
On arrival of a left vertex, its edge utilities (or weights) to all the right vertices and a corresponding cost (or bid) are revealed.  If a left vertex is matched to any of the right vertices, then it has to be paid at least as much as its cost. 
The problem is to match each left vertex instantaneously and irrevocably
to any one of the right vertices, if at all, to find the maximum weight matching that is truthful,
under a payment budget constraint.  Truthfulness condition requires that no left vertex has any incentive of misreporting its cost. Assuming that the vertices arrive in an uniformly random order (secretary model) with arbitrary utilities, a truthful algorithm is proposed that is $24\beta$-competitive (where $\beta$ is the ratio of the maximum and the minimum utility) and satisfies the payment budget constraint. 
Direct applications of this problem include crowdsourcing auctions, and matching wireless users to cooperative relays in device-to-device enabled cellular network. 
\end{abstract}
\vspace{-0.14in}
\section{Introduction} \label{sec:intro}
Motivated by applications in crowdsourcing and device-to-device (D2D) communication, we consider an online bipartite matching problem over a bipartite graph $G(L\cup R,E)$, where the right vertex set $R$ is known ahead of time, while left vertices of $L$ arrive sequentially in a random order. The incident edge utilities from a vertex $\ell\in L$ to set $R$ are revealed only upon its arrival, as well as its cost $c_{\ell}$, and the problem is to decide which vertex of $R$ to match with $\ell$, if at all, immediately and irrevocably. If vertex $\ell$ is matched, a payment $p_{\ell}$ is made to vertex $\ell$ that has to be at least as much as its reported cost $c_{\ell}$.  A total  budget constraint of $B$ is assumed for payments to be made to the matched left vertices. We assume that left vertices are strategic players, which could potentially manipulate the reporting of their true cost, and hence seek a truthful algorithm, i.e., such that no incoming vertex has incentive to misreport its cost. 

The considered problem is a {\it truthful} generalization of the online knapsack problem, where each item has a value and a weight, and the problem is to accept each item instantaneously and irrevocably, so as to maximize the sum of the values of the accepted items subject to a sum-weight constraint. If all left vertices in our problem are truthful, then keeping $p_{\ell} = c_{\ell}$ satisfies the truthfulness constraint, and the considered problem specializes to the knapsack problem with knapsack size of $B$. The best known algorithm for the online knapsack problem is a $10e$-competitive algorithm \cite{babaioff2007knapsack}. 
\vspace{0.3in}

The considered problem is a special case of a reverse auction \cite{myerson1981optimal}, where users submit bids for accomplishing a set of tasks and if selected, expect a payment at least as much as their reported bids. The generality is in not enforcing the one-to-one matching constraint, i.e., one user can do more than one task, or any task can be assigned to more than one user.
 Budget feasible mechanisms have been introduced in \cite{Singer10} for reverse auctions (set $R$ is made of a single vertex). For the offline problem, with a matching constraint similar to this paper, a $3$-approximate algorithm has been derived in \cite{goel2013matching} that is one-sided truthful. When  the goal is to maximize the number of assigned tasks, \cite{Singer13} provides a $320$-competitive truthful algorithm assuming the secretary-input model. 
 
Some important applications of the considered problem are in crowdsourcing and device-to-device (D2D) communication, that is expected to be part of modern wireless communication standards. For crowdsourcing applications, a platform advertises a set of tasks it wants to accomplish and multiple users successively bid for completing those tasks and expect some payment towards that end. The job of the platform is to select the set of users that maximize its utility under a total budget constraint. If the platform is not careful in selecting its payment strategy, bidders have an incentive to misreport their costs \cite{yang2012crowdsourcing, subramanian2015online}, and therefore we seek a truthful algorithm for matching users to tasks and decide on corresponding payments.

From a D2D communication perspective, consider a single base station (BS) and a set of users $U$ that are connected to that BS. Let $R\subseteq U$ be the set of active users, while the remaining $L= U\backslash R$ are inactive but can potentially help users of set $R$ to relay their communication to/from the BS, as envisaged in future networks. For any inactive user $\ell \in L$, the set of active users it can help is $R(\ell) \subseteq R$ and at any time it can only help any one user from $R(\ell)$. Since relaying requires $\ell$ to spend some of its resources, e.g., battery, it is natural to assume that $\ell$ expects some payment for its help, and submits a corresponding bid at the time of advertising its inactive state and ability to help. The job of the BS is to allocate at most one helper (matching) from set $L$ for each user in set $R$, and decide the corresponding payment that is at least as much as the bid of that user in $L$. 
Clearly, there is incentive for users in set $L$ to misreport their bids in order to extract more payments and this motivates the need to seek truthful matching algorithm.

The main contribution of this paper is a $24\beta$-competitive randomized matching algorithm that is truthful and satisfies the payment budget constraint, where $\beta$ is the ratio of the largest to the smallest utility of any edge. To keep the problem non-degenerate, similar to other prior related works on online algorithms \cite{babaioff2007knapsack, KorulaPal}, we consider a secretarial input model, where the order of arrival of left vertices is uniformly random, but their utilities and bids can be arbitrary or even adversarial. Under this model, we modify the offline algorithm \cite{goel2013matching} and then use the sample and price algorithm \cite{babaioff2007knapsack} to make the algorithm online, where the novelty is in terms of defining the price for each right vertex and the corresponding payment for any left vertex that is matched to the right vertex. To contrast with the online knapsack problem, which we noted is a special case where truthfulness is guaranteed, the price of truthfulness is $24\beta/10e $ in terms of competitiveness.

\section{Online Truthful Budgeted \\ Matching}
Let $G = (L \cup R, E)$ be a bipartite graph with left vertices $L$, and right vertices $R$, and edge set $E$. 
The utility or weight of each edge $e$ is $u(e)$. For a set of edges $E'$, its utility is $u(E') = \sum_{e\in E'} u(e)$. 
Each left vertex $\ell$ has an associated cost or bid $c_{\ell}$, that does not depend on the right vertex $r$. We denote $c(e)=c_{\ell}$ as the cost/bid of edge $e=(\ell,r)$. 
If a left vertex $\ell$ is matched to a \marceau{right vertex $r$}, then a minimum payment of \marceau{$c_{\ell}$ } has to be made to user $\ell$, with an overall budget constraint of $B$. Let $u_{max} = \max_e u(e)$ and $u_{min} = \min_e u(e)$. We assume that $\frac{u_{max}}{u_{min}} \le \beta$ . Moreover, we also assume the typical large market assumption \cite{goel2013matching}, i.e., $\frac{u_{max}}{u^*}$ is small, where $u^*$ is the optimal sum-utility  of the matching under the budget constraint. Thus, no single user can influence the outcome significantly.

\begin{rem}As shown in \cite{yang2012crowdsourcing}, if bids of left vertices are used as payments, there is  incentive for left vertices to misreport their bids, and consequently the mechanism is not truthful or incentive compatible. Thus, the payment strategy is non-trivial. 
\end{rem} 

In this work, we consider the online problem, where the set $R$ of right vertices is known ahead of time, while the left vertices of set $L$ arrive sequentially in time and reveal their edge set (and utilities) incident on the right vertices together with their bid. 
On each left vertex arrival, it has to be matched irrevocably to any one of the unmatched right vertex at that time, if at all. If a left vertex is matched, then the payment to be made to it is also decided at the time of its arrival that cannot be changed later. 

To keep the problem non-degenerate in terms of competitive ratio, we assume that the order of arrival of left vertices is uniformly random, that is each permutation of left vertices is equally likely. As a result, the objective is to find a truthful algorithm with constant expected competitive ratio under the payment budget constraint of $B$. The weights (utilities), however, are allowed to be selected by an adversary.

Before considering the online scenario, we first discuss the offline (all left vertices and their edges are available non-causally) case, and define the optimal fractional solution (matching under budget constraint) to be $\mathsf{OPT}(B)$.
Note that for defining OPT, we are using raw bids as payments and truthfulness is not required. 
We note an important property for $\mathsf{OPT}(B)$ whose proof is immediate.
\begin{lemma}\label{lem:optscaling} For $\alpha \le 1$, 
$u(\mathsf{OPT}(B)) \le \frac{1}{\alpha}u(\mathsf{OPT}(\alpha B))$.
\end{lemma}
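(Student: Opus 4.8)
The plan is a direct scaling argument on the fractional budgeted matching polytope. First I would write $\mathsf{OPT}(B)$ explicitly as the optimum of the linear program in variables $x_e \in [0,1]$, $e \in E$, subject to the fractional matching constraints $\sum_{e \ni \ell} x_e \le 1$ for every $\ell \in L$ and $\sum_{e \ni r} x_e \le 1$ for every $r \in R$, together with the payment-budget constraint $\sum_{e \in E} c(e) x_e \le B$, with objective $\sum_{e \in E} u(e) x_e$. Recall that, as noted just before the lemma, the budget constraint here uses the raw bids $c(e)$ as payments.

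Next, let $x^\star = (x^\star_e)_{e\in E}$ be an optimal solution attaining $u(\mathsf{OPT}(B))$, and set $x'_e := \alpha\, x^\star_e$ for all $e$. I would then check that $x'$ is feasible for the analogous LP with budget $\alpha B$: since $0 \le \alpha \le 1$ we have $x'_e \in [0,1]$; each matching constraint is scaled to $\sum_{e \ni v} x'_e = \alpha \sum_{e \ni v} x^\star_e \le \alpha \le 1$ for every vertex $v \in L \cup R$; and the budget constraint becomes $\sum_{e} c(e) x'_e = \alpha \sum_e c(e) x^\star_e \le \alpha B$. Hence $x'$ is a feasible point of the budget-$\alpha B$ program, so its objective value is a lower bound on $u(\mathsf{OPT}(\alpha B))$:
\[
u(\mathsf{OPT}(\alpha B)) \;\ge\; \sum_{e} u(e) x'_e \;=\; \alpha \sum_{e} u(e) x^\star_e \;=\; \alpha\, u(\mathsf{OPT}(B)).
\]
Dividing by $\alpha > 0$ gives $u(\mathsf{OPT}(B)) \le \tfrac1\alpha u(\mathsf{OPT}(\alpha B))$, as claimed.

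There is essentially no obstacle here: the only point requiring a line of care is confirming that multiplying a feasible fractional solution by $\alpha \le 1$ keeps all constraints (box, degree, and budget) satisfied, which is immediate from linearity and $\alpha \le 1$. If one prefers to avoid writing out the LP, the same argument can be phrased combinatorially by scaling the fractional weight on each edge of the optimal matching; I would still keep the LP phrasing since it makes the feasibility check transparent.
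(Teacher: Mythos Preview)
Your proposal is correct and is precisely the ``immediate'' scaling argument the paper has in mind: the paper states the lemma without proof, and your LP-feasibility check after multiplying the optimal fractional solution by $\alpha$ is the natural way to make that immediacy explicit. The only cosmetic remark is that $\alpha>0$ is implicitly assumed when you divide by $\alpha$, which is consistent with how the lemma is used later (with $\alpha=1/\beta$).
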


For the offline scenario, we propose a \textsc{Threshold} algorithm that is inspired (a modified version) by the \textsc{UniformMechanism} algorithm \cite{goel2013matching}, where the \textsc{Greedy} subroutine is the usual greedy matching algorithm for a bipartite graph. 
We define for each edge $e = (\ell, r)$ a buck per bang $b(e) = \frac{c(e)}{u(e)}$ that represents the \marceau{cost per unit utility}. For any $\gamma$, let $G(\gamma)$ be the graph obtained by removing all edges $e \in E(G)$ with buck per bang $b(e) > \gamma$.

\begin{algorithm}
\caption{\textsc{Threshold}}\label{alg:unimech}
\begin{algorithmic}[1]
\State {\bf Input:} Graph $G$, Budget $B$, $m=|E(G)|$
\State {\bf Output:} Matching $\sfM$, Threshold $\gamma_B$
	\State$\cA(G) = \{\gamma : \sum_{e\in \sfM}\gamma u(e)\leq B,\; \sfM=\mbox{\textsc{Greedy}}(G(\gamma))\}$
	\State $\gamma_B=\max\{\gamma: \gamma \in \cA(G) \}$
\State Accept all users in $\sfM = \mbox{\textsc{Greedy}}(G(\gamma_B))$ 
\end{algorithmic}
\end{algorithm}
The main idea behind the \textsc{Threshold} algorithm is to find 
the largest threshold $\gamma_B$, (subject to budget constraint), such that all edges whose buck per bang is more than that threshold are not considered for matching, while maintaining a (greedy) matching with large enough sum-utility. 
\begin{lemma}\label{lem:umguarantee}Let $\sfM$ be the matching output by \textsc{Threshold} algorithm with input graph $G$ under budget constraint $B$. Then $u(\sfM) \ge \frac{\mathsf{OPT}(B)}{3}$.
\end{lemma}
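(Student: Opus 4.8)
My plan is to compare the matching $\sfM$ produced by \textsc{Threshold} with the optimal fractional solution $\mathsf{OPT}(B)$ by exploiting the structure of the threshold $\gamma_B$. First I would observe that $\gamma_B$ is the largest value of $\gamma$ such that the greedy matching $\sfM = \textsc{Greedy}(G(\gamma))$ satisfies $\sum_{e\in\sfM}\gamma\, u(e)\le B$. Note that the budget used is measured with the \emph{uniform price} $\gamma_B$, i.e. the payment to a matched left vertex $\ell$ on edge $e$ is $\gamma_B u(e) \ge c(e)$ (since $e\in G(\gamma_B)$ means $b(e)=c(e)/u(e)\le\gamma_B$), so $\sfM$ is budget feasible and the comparison is well posed. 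The key point is that $\mathsf{OPT}(B)$, when restricted to the subgraph $G(\gamma_B)$ of ``affordable'' edges, cannot have much more utility than $\sfM$ — and separately, the edges of $\mathsf{OPT}(B)$ with $b(e)>\gamma_B$ cannot contribute much either, because buying even one of them at its true cost would either blow the budget or contradict maximality of $\gamma_B$.

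The main steps I would carry out are: (1) Split $\mathsf{OPT}(B)$ into $\mathsf{OPT}_{\le}$, the edges with $b(e)\le\gamma_B$ (these live in $G(\gamma_B)$), and $\mathsf{OPT}_{>}$, the edges with $b(e)>\gamma_B$. (2) Bound $u(\mathsf{OPT}_{\le})$: since $\textsc{Greedy}$ run on any bipartite graph produces a maximal matching, it is a $2$-approximation to the maximum-weight matching on that graph — more carefully, for greedy \emph{maximum-weight} matching one gets $u(\textsc{Greedy}(H))\ge \tfrac12 u(M^\star(H))$ for any $H$ — applied to $H=G(\gamma_B)$ this gives $u(\mathsf{OPT}_{\le})\le 2\,u(\sfM)$, since $\mathsf{OPT}_{\le}$ is a (fractional) matching in $G(\gamma_B)$ and the maximum-weight fractional matching in a bipartite graph equals the integral one. (3) Bound $u(\mathsf{OPT}_{>})$: here I would argue that these edges are expensive enough that their total utility is at most $u(\sfM)$, using the maximality of $\gamma_B$ — if $\gamma_B$ could be increased, more edges would become affordable, so at $\gamma_B^+$ the greedy matching already saturates the budget; combined with Lemma~\ref{lem:optscaling} to relate $\mathsf{OPT}$ at different budget levels, this caps the high-buck-per-bang contribution. (4) Combine: $u(\mathsf{OPT}(B)) = u(\mathsf{OPT}_{\le}) + u(\mathsf{OPT}_{>}) \le 2\,u(\sfM) + u(\sfM) = 3\,u(\sfM)$, which rearranges to the claim.

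The step I expect to be the main obstacle is step (3), bounding the utility of the optimal edges that are \emph{too expensive} to be in $G(\gamma_B)$. The subtlety is that $\gamma_B$ is maximal only subject to a discreteness coming from the finitely many values $b(e)$ can take, so the ``next'' threshold $\gamma'$ above $\gamma_B$ corresponds to adding at least one new edge, and I must show that the budget violation at $\gamma'$ — i.e. $\sum_{e\in\textsc{Greedy}(G(\gamma'))}\gamma' u(e) > B$ — translates into a genuine upper bound on $u(\mathsf{OPT}_{>})$ rather than just on one edge. I anticipate needing the large-market assumption ($u_{\max}/u^\star$ small) here to handle the boundary edge cleanly, together with Lemma~\ref{lem:optscaling} (with $\alpha = \gamma_B/\gamma'$ or a similar ratio) to convert a bound on $u(\mathsf{OPT}(\gamma_B B))$-type quantities back to $u(\mathsf{OPT}(B))$. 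Steps (1), (2) and (4) are essentially bookkeeping plus the standard greedy-matching factor-$2$ fact; step (3) is where the specific construction of \textsc{Threshold} (and its divergence from the original \textsc{UniformMechanism}) really has to be used.
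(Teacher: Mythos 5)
Your decomposition of $\mathsf{OPT}(B)$ into $\mathsf{OPT}_{\le}$ and $\mathsf{OPT}_{>}$ according to whether $b(e)\le\gamma_B$, the factor-$2$ greedy bound for $\mathsf{OPT}_{\le}$, and the final combination $u(\mathsf{OPT})\le 2u(\sfM)+u(\sfM)$ coincide with the paper's proof. The gap is in your step (3), which you correctly single out as the crux but attack from the wrong side. The paper bounds $u(\mathsf{OPT}_{>})$ without any appeal to the maximality of $\gamma_B$ or to Lemma~\ref{lem:optscaling}: every edge $e=(\ell,r)\in\mathsf{OPT}_{>}$ has $c_{\ell}>\gamma_B u(e)$, i.e.\ $u(e)<c_{\ell}/\gamma_B$, and since the optimal solution is itself budget feasible, $\sum_{e=(\ell,r)\in\mathsf{OPT}}x(\ell)c_{\ell}\le B$, so immediately $u(\mathsf{OPT}_{>})< B/\gamma_B$. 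The maximality of $\gamma_B$ enters only in the \emph{other} half of the comparison: it guarantees that \textsc{Threshold}'s spending $\gamma_B\sum_{e\in\sfM}u(e)$ is (essentially) equal to $B$, hence $u(\sfM)\approx B/\gamma_B\ge u(\mathsf{OPT}_{>})$, with the slack at most $\gamma_B u_{max}$ absorbed by the large-market assumption in the non-tight case.

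Your proposed mechanism for step (3) --- the budget violation at the next admissible threshold $\gamma'>\gamma_B$ combined with Lemma~\ref{lem:optscaling} --- does not obviously bound $u(\mathsf{OPT}_{>})$ at all: the expensive edges of $\mathsf{OPT}$ need not appear in any greedy matching, so knowing $\sum_{e\in\textsc{Greedy}(G(\gamma'))}\gamma' u(e)>B$ says nothing directly about their total utility. And if you intended to use that violation to lower bound $u(\sfM)$ instead, note that Lemma~\ref{lem:monotonegreedymatching} gives $u(\textsc{Greedy}(G(\gamma')))\ge u(\textsc{Greedy}(G(\gamma_B)))$, so the monotonicity runs the wrong way: a violated budget at $\gamma'$ lower-bounds the greedy utility at $\gamma'$, not at $\gamma_B$, and the jump between the two can be large. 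The missing idea is simply to charge the expensive $\mathsf{OPT}$ edges against $\mathsf{OPT}$'s own budget constraint at their true costs; with that observation your steps (1), (2), (4) close the argument exactly as in the paper.
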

\begin{proof} 

For convenience we suppress the dependence on $B$ whenever its not essential to do so. Decompose the optimal fractional matching solution $\mathsf{OPT} = \{\mathsf{OPT}^{+} \cup \mathsf{OPT}^{-}\}$, where $\mathsf{OPT}^{+}$ contains edges of $\mathsf{OPT}$ that have $b(e) > \gamma_B$, and $\mathsf{OPT}^{-}$ contains edges of $\mathsf{OPT}$ that have $b(e) \le \gamma_B$. Similarly, let $\mathsf{OPT}(\gamma_B)$ be the optimal fractional matching on subgraph $G(\gamma_B)\subseteq G$, where $\gamma_B$ is the output threshold from the \textsc{Threshold} algorithm with graph $G$.  By definition of optimal matching, $u(\mathsf{OPT}^{-}) \le u(\mathsf{OPT}(\gamma_B))$. Moreover, for $\sfM$, the output matching from \textsc{Threshold} algorithm with graph $G$, we have 
$u(\sfM) \ge \frac{u(\mathsf{OPT}(\gamma_B))}{2}$, since $\sfM$ is a greedy matching on $G(\gamma)$ (subgraph with all edges having $b(e) \le \gamma_B$). Therefore, 
$u(\sfM) \ge \frac{u(\mathsf{OPT}^{-})}{2}$.

All edges $e = (\ell ,r) \in\mathsf{OPT}^{+}$, have $b(e) = \frac{c_{\ell}}{u(e)} > \gamma_B$. Thus, 
$u(\mathsf{OPT}^{+}) = \sum_{e =(\ell, r)   \in \mathsf{OPT}^{+}} x(\ell) u(e) < \frac{\sum_{e = (\ell ,r) \in \mathsf{OPT}^{+}}x(\ell) c_{\ell}}{\gamma_B}$, where $x(\ell)$ are fractional weights in the optimal solution. 
Moreover, the total budget constraint of $B$ ($\sum_{e = (\ell ,r) \in \mathsf{OPT}} x(\ell)c_{\ell} \le B$) implies that $u(\mathsf{OPT}^{+}) < \frac{B}{\gamma_B}$. Assuming that the budget constraint is tight with the \textsc{Threshold} algorithm ($\sum_{e\in \sfM}\gamma_B u(e)= B$), $u(\sfM) = \frac{B}{\gamma_B}$. Therefore, 
$u(\mathsf{OPT}^{+}) < u(\sfM)$. Combining this with $u(\sfM) \ge \frac{u(\mathsf{OPT}^{-})}{2}$, we have $u(\sfM^*) \le 3 u(\mathsf{OPT})$ as required.

If the budget constraint is not tight with the \textsc{Threshold} algorithm, then under our assumption that the maximum utility of any edge is $u_{max}$, and by the definition of \textsc{Threshold} algorithm that finds the largest feasible $\gamma$, the leftover budget $B-\sum_{e\in \sfM}\gamma_B u(e)$ is no more than $\gamma_B u_{max}$, and similar argument gives us that $u(\mathsf{OPT}) \le (3+o(1)) u(\sfM)$. 
\end{proof}
We next state a critical lemma for analyzing the proposed online version of \textsc{Threshold}, $\mathsf{ON}$. 

\begin{lemma}\label{lem:monotonegreedymatching}
Let $G = (L\cup R, E)$ and $F\subseteq G$, such that $F = (L\backslash L'\cup R, E')$, and the edge set $E'$ is such that all edges incident on left vertices in set $L'$ are removed simultaneously, while all edges incident on $L\backslash L'$ are retained as it is. Then $$u(\textsc{Greedy}(G)) \ge u(\textsc{Greedy}(F)).$$ Moreover $$u(\textsc{Greedy}(G(\gamma_1))) \ge u(\textsc{Greedy}(G(\gamma_2)))$$ for $\gamma_1 \ge \gamma_2$, and $u(\textsc{Greedy}(G(\gamma))) \ge u(\textsc{Greedy}(F(\gamma)))$.
\end{lemma}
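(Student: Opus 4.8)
The plan is to derive all three inequalities from a single \emph{core} monotonicity fact about \textsc{Greedy}, and to prove that fact by a coupled run of the greedy procedure on the two graphs.

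\textbf{Core lemma (one deletion).} \emph{If $e_0=(\ell_0,v_0)$ is a minimum-weight edge among all edges of $G$ incident on $\ell_0$, then $u(\textsc{Greedy}(G))\ge u(\textsc{Greedy}(G-e_0))$.} Granting this, the three statements follow by peeling off edges one at a time. For $u(\textsc{Greedy}(G))\ge u(\textsc{Greedy}(F))$: list all edges incident on $L'$ in non-decreasing weight order and delete them one by one; since \emph{every} edge at a vertex $\ell\in L'$ is eventually deleted, at the step where such an edge $f$ is removed it is a minimum-weight edge at its left endpoint in the current graph, so that deletion does not increase the greedy weight; after all deletions the vertices of $L'$ are isolated and greedy on the result has the same edge set as $\textsc{Greedy}(F)$. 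The identical scheme starting from $G(\gamma)$ gives $u(\textsc{Greedy}(G(\gamma)))\ge u(\textsc{Greedy}(F(\gamma)))$, because $F(\gamma)$ is exactly $G(\gamma)$ with the left vertices $L'$ deleted (thresholding removes edges only according to their own $b$-value, hence commutes with deleting left vertices). For $u(\textsc{Greedy}(G(\gamma_1)))\ge u(\textsc{Greedy}(G(\gamma_2)))$ when $\gamma_1\ge\gamma_2$: here $G(\gamma_2)$ is obtained from $G(\gamma_1)$ by deleting, at each left vertex $\ell$, the edges $e$ with $\gamma_2<b(e)\le\gamma_1$; since $b(e)=c_\ell/u(e)$, these are precisely the edges at $\ell$ with $u(e)<c_\ell/\gamma_2$, i.e.\ of weight strictly smaller than every \emph{retained} edge at $\ell$. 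So deleting them one at a time in non-decreasing weight order, each is again a minimum-weight edge at its left endpoint in the current graph.

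\textbf{Proof of the core lemma.} Run \textsc{Greedy} on $G$ and on $G-e_0$ in parallel, scanning edges in the common non-increasing weight order (with a fixed tie-break). The runs make identical decisions on every edge scanned before $e_0$. When $e_0$ is scanned the $(G-e_0)$-run skips it; if the $G$-run also skips it (an endpoint already matched) the runs stay identical thereafter and the matchings coincide. Otherwise the $G$-run adds $e_0$, so $\ell_0$ and $v_0$ are both free in both runs at that moment. The key point is that $\ell_0$ is now \emph{permanently unmatched} in the $(G-e_0)$-run: its only remaining incident edges are the other edges at $\ell_0$, all of weight $\ge u(e_0)$, hence already scanned and already skipped (as $\ell_0$ was still free). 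Thus $\ell_0$ has degree $0$ in $\textsc{Greedy}(G-e_0)$ but degree $1$ in $\textsc{Greedy}(G)$. Now trace the divergence: right after $e_0$ the two partial matchings differ only at $\{\ell_0,v_0\}$; the next edge on which the runs disagree cannot be at $\ell_0$ (it has no unscanned incident edge), so it is an edge $v_0\ell_1$ that the $(G-e_0)$-run adds while the $G$-run rejects; they then differ only at $\{\ell_0,v_0,\ell_1\}$, the next disagreement is at $\ell_1$ and is an edge $\ell_1 v_1$ added by the $G$-run, and so on. Hence $M:=\textsc{Greedy}(G)\,\triangle\,\textsc{Greedy}(G-e_0)$ is a single alternating path $\ell_0,v_0,\ell_1,v_1,\dots$, with the $\textsc{Greedy}(G)$-edges in odd positions, the $\textsc{Greedy}(G-e_0)$-edges in even positions, and (since each successive path edge is scanned later than the previous one) weights non-increasing along the path. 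Pairing the $i$-th $\textsc{Greedy}(G)$-edge with the $i$-th $\textsc{Greedy}(G-e_0)$-edge, the former is never lighter; summing over the pairs, together with the fact that $\textsc{Greedy}(G)$ has at least as many edges on $M$ as $\textsc{Greedy}(G-e_0)$, gives $u(\textsc{Greedy}(G)\cap M)\ge u(\textsc{Greedy}(G-e_0)\cap M)$, hence $u(\textsc{Greedy}(G))\ge u(\textsc{Greedy}(G-e_0))$.

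\textbf{Main obstacle.} The heart of the matter is the "$\ell_0$ permanently unmatched" step, and through it the assertion that $M$ is a single non-branching path (no cycles, no branching) whose weights decrease away from $e_0$; this is exactly where the minimum-weight hypothesis is indispensable. If $e_0$ were not a lightest edge at $\ell_0$, the $(G-e_0)$-run could later re-match $\ell_0$ through a lighter edge, making $\ell_0$ an interior vertex of $M$ and putting the heavy edge $e_0$ in the middle of a path — and then greedy is genuinely not monotone under a single edge deletion (on the instance with edges $a_1v_1,a_1v_2,a_2v_1$ of weights $10,9,8$, greedy picks only $a_1v_1$ for total $10$, but deleting $a_1v_1$ lets greedy pick $a_1v_2$ and $a_2v_1$ for total $17$). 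A secondary technicality is tie-breaking among equal-weight edges at $\ell_0$; for clarity one may assume the edge weights are distinct, the general case being handled by a standard tie-breaking (or infinitesimal perturbation) argument that affects neither the greedy matchings nor the subgraphs $G(\gamma)$.
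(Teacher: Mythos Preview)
Your proof is correct. The paper's own argument is only a sketch: it asserts (without details) that ``the weight of the edge incident on any right vertex in $\textsc{Greedy}(G)$ is at least as much as in $\textsc{Greedy}(F)$'', and then observes, as you do, that passing from $G$ to $G(\gamma)$ deletes edges monotonically at each left vertex (since $b(e)>\gamma$ iff $u(e)<c_\ell/\gamma$), so that the second and third claims reduce to the first.

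Your packaging differs mildly from the paper's. Rather than arguing the right-vertex-wise domination directly for the full deletion $G\to F$, you isolate a \emph{core lemma} for a single minimum-weight edge deletion and then peel off edges one at a time. The coupled-run/alternating-path argument you give for the core lemma in fact proves the paper's stronger per-right-vertex statement as well: your pairing of the $i$-th $\textsc{Greedy}(G)$-edge with the $i$-th $\textsc{Greedy}(G-e_0)$-edge along the path is precisely the pairing by common right vertex $v_{i-1}$. So the two approaches are the same mechanism organized differently. What your organization buys is a uniform reduction of all three claims to one lemma, an explicit identification of the hypothesis that drives it (the deleted edge must be lightest at its left endpoint), and a concrete counterexample explaining why greedy is \emph{not} monotone under arbitrary single-edge deletion. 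What the paper's phrasing buys is the slightly stronger per-right-vertex invariant stated up front, though as noted your proof yields that too. The tie-breaking caveat you flag is real but, as you say, routine: order the deletions by the reverse of greedy's fixed scan order so that at the moment $e_0$ is deleted every remaining edge at $\ell_0$ is scanned by greedy before $e_0$.
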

\begin{proof} For arbitrary subgraph $F \subseteq  G$, $u(\text{\textsc{Greedy}}(G))$ may or may not be larger than $u(\text{\textsc{Greedy}}(F))$. However, when a left vertex is removed (by deleting all edges incident to it), the proof of claim $1$ follows standard procedure by showing that the weight of the edge incident on any right vertex in $\text{\textsc{Greedy}}(G)$ is at least as much as in  $\text{\textsc{Greedy}}(F)$. Detailed proof is omitted for lack of space. 
For the second  and third claim, note that an edge $e$ incident on left vertex $\ell$ is removed in $G(\gamma)$ compared to $G$, if 
$b(e) > \gamma$ or equivalently if $u(e) < \frac{c_{\ell}}{\gamma}$. 
Recall that the cost of any edge only depends on the index of its left vertex. Hence, if edge $e =(\ell, r)$ is removed from $G$ to obtain $G(\gamma)$, 
then all the edges $e'$ incident on $\ell$ with utility $u(e') < u(e) $ are also removed. So essentially, edges are removed monotonically from $G$ to produce $G(\gamma)$. So the proofs for the second and third claim follow similarly to the first.
%
%
\end{proof}
The importance of Lemma \ref{lem:monotonegreedymatching} is in showing that \textsc{Threshold} is solvable in polynomial time and the threshold $\gamma_B$ is monotonic. We prove the two claims as follows.

\begin{lemma}\label{lem:polytimecomplexity}
\textsc{Threshold} is solvable in polynomial time.
\end{lemma}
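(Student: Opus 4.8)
The plan is to exploit the fact that, although $\gamma$ ranges over a continuum, the subgraph $G(\gamma)$ — and hence the greedy matching $\textsc{Greedy}(G(\gamma))$ and its utility — takes only finitely many values. First I would observe that $G(\gamma)$ changes only at the $m$ critical values $\{b(e) : e \in E(G)\}$: for $\gamma$ lying strictly between two consecutive such values the set of retained edges $\{e : b(e) \le \gamma\}$ is constant, so $G(\gamma)$ is constant. Hence, after sorting the edges by buck per bang $b(e_1) \le b(e_2) \le \cdots \le b(e_m)$, there are at most $m+1$ distinct graphs $G_0 \subseteq G_1 \subseteq \cdots \subseteq G_m = G$, where $G_i$ retains exactly $e_1,\dots,e_i$, and $G(\gamma) = G_i$ for $\gamma \in [b(e_i), b(e_{i+1}))$ (with the conventions $b(e_0)=0$, $b(e_{m+1})=\infty$).

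Next I would compute, for each $i$, the greedy matching $\sfM_i = \textsc{Greedy}(G_i)$ and its weight $W_i = u(\sfM_i)$; each such computation is a single run of greedy bipartite matching (sort edges by weight, add an edge when both endpoints are free), which runs in polynomial time, so all $m+1$ of them together are polynomial. On the interval $[b(e_i), b(e_{i+1}))$ the condition defining $\cA(G)$ becomes the single linear inequality $\gamma W_i \le B$, so $\cA(G) \cap [b(e_i), b(e_{i+1}))$ is itself an interval and $\cA(G)$ is a finite union of intervals; its supremum is therefore one of the $O(m)$ explicit candidates $b(e_i)$ or $B/W_i$, all computable in polynomial time. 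Taking the largest feasible candidate yields $\gamma_B$, and one more greedy computation outputs $\sfM = \textsc{Greedy}(G(\gamma_B))$.

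To argue that this largest feasible candidate is genuinely the maximum (and to speed the search), I would invoke Lemma~\ref{lem:monotonegreedymatching}: since $u(\textsc{Greedy}(G(\gamma)))$ is non-decreasing in $\gamma$ and $\gamma \ge 0$, the product $\gamma\, u(\textsc{Greedy}(G(\gamma)))$ is non-decreasing, so $\cA(G)$ is a single down-closed interval $[0,\gamma_B]$. Consequently the $W_i$ are non-decreasing in $i$, feasibility of the breakpoints is monotone, and a binary search over $i \in \{0,\dots,m\}$ locates the last feasible interval, after which $\gamma_B = \min\{b(e_{i^*+1}),\, B/W_{i^*}\}$; this gives a polynomial (indeed near-linear in $m$ times one greedy call) bound.

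The step I expect to be the main obstacle is the first one — pinning down precisely that $G(\gamma)$ is piecewise constant with breakpoints exactly at the $b(e)$, handling the half-open intervals carefully, and in particular the boundary case where $\gamma\, u(\textsc{Greedy}(G(\gamma)))$ jumps across $B$ at a breakpoint, so that the supremum of $\cA(G)$ is attained at the value $B/W_{i^*}$ inside the preceding interval rather than at the breakpoint itself. Everything after that — the polynomial-time bound for greedy matching and for the finite enumeration or binary search — is routine.
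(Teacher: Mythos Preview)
Your argument is correct and shares the paper's core idea: both use the monotonicity claim of Lemma~\ref{lem:monotonegreedymatching} to conclude that $\cA(G)$ is a down-closed interval, so that a bisection/binary search locates $\gamma_B$. The paper's proof stops at ``hence we can use bisection,'' whereas you additionally supply the discretization (piecewise constancy of $G(\gamma)$ at the $m$ breakpoints $b(e)$) that makes the search finite and handles the boundary case---detail the paper elides but which your version makes explicit.
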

Algorithm \textsc{Threshold} involves finding a maximum in Step 4. We will show that one can use bisection to solve this maximization. We would like to note that if $u(\textsc{Greedy}(G(\gamma))) \ngtr u(\textsc{Greedy}(F(\gamma)))$, then finding this maximum is non-trivial. 
\begin{proof} From the definition of  Algorithm \textsc{Threshold} its clear that if any $\gamma \in \cA(G)$, then $\gamma_B \ge \gamma$. Hence the key step is to show that if any $\gamma \notin \cA(G)$, then $\gamma_B < \gamma$ which follows from the second claim of Lemma \ref{lem:monotonegreedymatching}, that $u(\textsc{Greedy}(G(\gamma_1))) \ge u(\textsc{Greedy}(G(\gamma_2)))$ for $\gamma_1 \ge \gamma_2$. Therefore, if for any $\gamma \notin \cA(G)$, then for any $\gamma' > \gamma$, $\gamma' \notin \cA(G)$. Hence we can use bisection to find the maximum.
\end{proof} 

The main and critical difference between the \textsc{Threshold} and  \textsc{UniformMechanism} \cite{goel2013matching} algorithm is the maximization step that ensures the following monotonicity property on $\gamma_B$ (Lemma \ref{lem:monotoneGamma}) that allows us to make the algorithm {\it online} using the \textsc{SampleandPrice} algorithm \cite{KorulaPal}.
\begin{lemma}\label{lem:monotoneGamma}
Let $G = (L\cup R, E)$ and $F = (L\backslash L'\cup R, E')$, where the edge set $E'$ is such that all edges incident on left vertices in set $L'$ are removed simultaneously, while all edges incident on $L\backslash L'$ are retained as it is. Then 
$\gamma_B(F) \ge \gamma_B(G)$.
\end{lemma}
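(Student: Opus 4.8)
The goal is to show that removing left vertices (deleting all edges incident to them) can only increase the threshold $\gamma_B$. Intuitively, the budget constraint in $\cA(G)$ reads $\gamma \cdot u(\textsc{Greedy}(G(\gamma))) \le B$, so the feasible set of $\gamma$ is controlled by the quantity $f_G(\gamma) := \gamma \cdot u(\textsc{Greedy}(G(\gamma)))$; a smaller graph should give a smaller $f$, hence a larger admissible $\gamma$.

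\textbf{Plan.} First I would recall from Lemma~\ref{lem:polytimecomplexity} (and the second claim of Lemma~\ref{lem:monotonegreedymatching}) that both $\cA(G)$ and $\cA(F)$ are down-closed intervals of the form $(0, \gamma_B(G)]$ and $(0,\gamma_B(F)]$, so it suffices to prove the inclusion $\cA(G) \subseteq \cA(F)$: then $\gamma_B(G) \in \cA(G) \subseteq \cA(F)$ forces $\gamma_B(F) \ge \gamma_B(G)$. Fix any $\gamma \in \cA(G)$, i.e. $\gamma \cdot u(\textsc{Greedy}(G(\gamma))) \le B$. I need to show $\gamma \cdot u(\textsc{Greedy}(F(\gamma))) \le B$ as well. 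The third claim of Lemma~\ref{lem:monotonegreedymatching} gives exactly $u(\textsc{Greedy}(G(\gamma))) \ge u(\textsc{Greedy}(F(\gamma)))$ — this is where the structural hypothesis on $F$ (a whole left vertex and all its incident edges removed at once) is essential, since for an arbitrary subgraph the greedy weight can go either way. Multiplying by $\gamma > 0$ and chaining with the assumed budget inequality yields $\gamma \cdot u(\textsc{Greedy}(F(\gamma))) \le \gamma \cdot u(\textsc{Greedy}(G(\gamma))) \le B$, so $\gamma \in \cA(F)$, completing the inclusion.

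\textbf{Main obstacle.} The only subtle point is making sure the ``down-closed interval'' reasoning is airtight for both graphs: I would note that $\gamma = 0$ (or arbitrarily small $\gamma$) is trivially in $\cA(\cdot)$ so the sets are nonempty, that $\gamma_B$ is attained as a maximum (Step~4 of the algorithm), and that the monotonicity of $\gamma \mapsto u(\textsc{Greedy}(G(\gamma)))$ together with the trivial fact that $\gamma \mapsto \gamma$ is increasing does \emph{not} immediately give monotonicity of $f_G$ in the form needed — rather, the proof of Lemma~\ref{lem:polytimecomplexity} already established that $\cA(G)$ is an interval, and the same argument applies verbatim to $F$. Once that is in hand, everything else is the one-line comparison above. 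I would also remark that an identical argument handles the non-tight-budget edge case if one wishes to be careful, but since $\gamma_B$ is defined purely through membership in $\cA(\cdot)$, the inclusion $\cA(G)\subseteq\cA(F)$ settles the statement regardless of tightness.
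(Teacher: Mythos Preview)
Your proof is correct and follows essentially the same route as the paper: both hinge on the third claim of Lemma~\ref{lem:monotonegreedymatching}, namely $u(\textsc{Greedy}(G(\gamma))) \ge u(\textsc{Greedy}(F(\gamma)))$, to deduce that any $\gamma$ feasible for $G$ is also feasible for $F$. The paper is slightly more economical in that it applies this comparison only at the single point $\gamma = \gamma_B(G)$ and concludes directly from $\gamma_B(F) = \max \cA(F)$, so your detour through the down-closed interval structure of $\cA(\cdot)$ is not actually needed --- but it is not wrong either.
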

\begin{proof}
From Lemma \ref{lem:monotonegreedymatching}, \begin{equation}\label{eq:FgGg}
u(\sfM(F(\gamma))) \le u(\text{\textsc{Greedy}}(G(\gamma))).
\end{equation}
Let the threshold and the matching obtained by running \textsc{Threshold} on $G$ with budget $B$ be $\gamma_B(G) = \gamma$, and $\sfM(G)$, respectively, where $\gamma \le \frac{B}{u(\text{\textsc{Greedy}}(G(\gamma)))}$. Now we consider $F(\gamma)$ as the input graph to the \textsc{Threshold} with same budget constraint $B$. Since $\gamma \le \frac{B}{u(\text{\textsc{Greedy}}(G(\gamma)))}$, from \eqref{eq:FgGg}, clearly, 
$\gamma \le \frac{B}{u(\sfM(F(\gamma))}$, and $\sum_{e\in \sfM(F(\gamma))}\gamma u(e)\leq B$. Therefore,  $\gamma \in \cA(F)$, which by definition of $\gamma_B(F)$ implies $\gamma_B(F) \ge \gamma$.
\end{proof}

We now describe our online algorithm $\mathsf{ON}$, that produces the matching $\sfM_{\mathsf{ON}}$ and associated payments for left vertices that are part of matching $\sfM_{\mathsf{ON}}$.
\begin{algorithm}
\caption{$\mathsf{ON}$ Algorithm}\label{alg:msandp}
\begin{algorithmic}[1]
\State {\bf Input:} $L$ set of left vertices/users that arrive sequentially, $R$ set of right vertices, Budget $B' = \frac{B}{\beta}$,  
\State $L_{1/2}$ = first half of left vertices $L$
\State Run \textsc{\textsc{Threshold}} on $G_{1/2}= (L_{1/2} \cup R, E_{1/2})$ with budget $B'$ to obtain $\gamma_{1/2}\triangleq\gamma_{B'}(G_{1/2})$ and matching $\sfM_{1/2}$
\For{each right vertex $r\in R$}
	\State Set value $v(r):=u(e)$ for $e = (\ell, r) \in \sfM_{1/2}$
	\EndFor
	\State \%Decision Phase
\State$\sfM_{\mathsf{ON}} =\emptyset$ 
\For{every new left vertex $\ell \in L \backslash L_{1/2}$},
	\State \%Pruning: 
	\State Delete all edges $e = (\ell, r), r\in R$ s.t. $b(e)>\gamma_{1/2}$
	\State Let $e^\star = \arg \max_{e=(\ell, r), r \in R, u(e) > v(r)} u(e)$ be the largest weight (utility) edge after pruning with weight larger than the value of the corresponding right vertex. 
	\State Let $e^\star$ be incident on right vertex $r^\star$ 	\If{$\sfM_{\mathsf{ON}} \cup  \{e^{\star}\}$ is a matching } 
	\State $\sfM_{\mathsf{ON}} = \sfM_{\mathsf{ON}}  \cup \{e^\star\}$ 
	\State Pay $p_{\ell} = \beta \gamma_{1/2} v(r^{\star})$ to user $\ell$ 
	\Else 
	\State Let $\ell$ be permanently unmatched
	\EndIf
\EndFor
\end{algorithmic}
\end{algorithm}
The idea behind $\mathsf{ON}$ is as follows:
\begin{itemize}
\item Do not match any of the first half of left vertices (called the observation phase), and only use them to run the offline \textsc{Threshold} algorithm and find the threshold $\gamma_{1/2}$ and the matching $\sfM_{1/2}$ with budget $B' = \frac{B}{\beta}$. Recall that $\beta > 1$, hence we are finding a conservative estimate for $\gamma_{1/2}$.
\item For any right vertex $r \in \sfM_{1/2}$, set its value $v(r)$ to be the weight (utility) of the matched edge in $\sfM_{1/2}$. We are assuming that $|L_{1/2}|$ is large enough compared to the number of right vertices and all right vertices are matched by \textsc{Threshold} in offline phase using the first half of the left vertices, i.e., $v(r) > 0, \ \forall \ r\in R$. 
\item In the decision phase, starting with the arrival of $|L/2|+1^{th}$ left vertex, delete all edges that have buck per bang $b(e)$ larger than $\gamma_{1/2}$. Among the surviving edges, match the left vertex to the right vertex with the largest weight (utility) that is higher than the value of the right vertex found from $\sfM_{1/2}$, if any. 
\item For each matched left vertex $\ell$, pay $p = \beta \gamma_{1/2} v(r)$, where $r$ is the right vertex to which the accepted edge from $\ell$ is matched. 
\end{itemize}

We now compute the expected utilities of matchings $\sfM_{1/2}$ and $\sfM_{\mathsf{ON}}$, where the expectation is over the uniformly random left vertex arrival sequences.
\begin{lemma}\label{lem:offguarantee} $\bbE\{u(\sfM_{1/2})\} \ge u(\mathsf{OPT}(\frac{B}{\beta}))/12$.
\end{lemma}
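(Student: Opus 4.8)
The plan is to reduce the claim to two facts that are already available --- the approximation guarantee of \textsc{Threshold} (Lemma~\ref{lem:umguarantee}) and the budget-scaling inequality for the offline optimum (Lemma~\ref{lem:optscaling}) --- glued together by a sub-sampling argument over the random arrival order. Write $B' = B/\beta$, and let $\mathsf{O}_{1/2}$ denote the optimal fractional budgeted matching on the subgraph $G_{1/2}$ induced by the first half $L_{1/2}$ of the arrivals, under budget $B'$. Since $\sfM_{1/2}$ is exactly the output of \textsc{Threshold} run on $G_{1/2}$ with budget $B'$, Lemma~\ref{lem:umguarantee} gives, \emph{for every realization} of $L_{1/2}$, $u(\sfM_{1/2}) \ge u(\mathsf{O}_{1/2})/3$ (with the understanding that when the budget is not tight this reads $u(\mathsf{O}_{1/2})/(3+o(1))$, the $o(1)$ being negligible under the large-market assumption that $u_{max}/u^*$ is small). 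Taking expectations over the uniformly random order, it therefore suffices to show $\bbE\{u(\mathsf{O}_{1/2})\} \ge u(\mathsf{OPT}(B'))/4$.

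For that I would exhibit a single feasible fractional matching on $G_{1/2}$ whose expected utility is at least $u(\mathsf{OPT}(B'))/4$. Take the offline optimum $\mathsf{OPT}(B'/2)$ on the whole graph $G$ under the halved budget $B'/2$, and project it onto $L_{1/2}$ by zeroing the fractional weight of every edge whose left endpoint is not in $L_{1/2}$; call the result $P$. Then $P$ inherits all the degree constraints of a fractional matching on both sides, and its total cost is no larger than that of $\mathsf{OPT}(B'/2)$, hence at most $B'/2 \le B'$; so $P$ is feasible for the relaxation underlying \textsc{Threshold} on $G_{1/2}$ with budget $B'$, giving $u(\mathsf{O}_{1/2}) \ge u(P)$ pointwise. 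Because the order is uniformly random, each left vertex lies in $L_{1/2}$ with probability $\lfloor |L|/2\rfloor/|L| = 1/2 - o(1)$, so by linearity of expectation $\bbE\{u(P)\} \ge (1/2 - o(1))\,u(\mathsf{OPT}(B'/2))$. Finally Lemma~\ref{lem:optscaling} with $\alpha = 1/2$ gives $u(\mathsf{OPT}(B'/2)) \ge u(\mathsf{OPT}(B'))/2$. Chaining the three bounds yields $\bbE\{u(\sfM_{1/2})\} \ge (1/3)(1/2 - o(1))(1/2)\,u(\mathsf{OPT}(B/\beta)) = (1/12 - o(1))\,u(\mathsf{OPT}(B/\beta))$, which is the stated bound once the lower-order terms are dropped via the large-market assumption.

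The points I would be careful to write out are: (i) the guarantee of Lemma~\ref{lem:umguarantee} is used realization-by-realization, i.e.\ it is a deterministic statement about \textsc{Threshold} on whatever $G_{1/2}$ is drawn, and is only then averaged --- this is clean because \textsc{Threshold} is deterministic given its input; (ii) the projected solution $P$ must be genuinely budget-feasible on $G_{1/2}$, and working with $\mathsf{OPT}(B'/2)$ rather than $\mathsf{OPT}(B')$ is precisely what makes feasibility hold for \emph{every} outcome of the random split (its cost never exceeds $B'/2$ no matter which left vertices land in $L_{1/2}$), not merely in expectation; and (iii) bookkeeping the two $o(1)$ contributions --- one from the non-tight-budget case of Lemma~\ref{lem:umguarantee}, one from $\lfloor|L|/2\rfloor/|L|$ versus $1/2$ --- and confirming both are absorbed by the large-market assumption. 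I expect (ii), i.e.\ arranging the projection so that feasibility is unconditional on the random split, to be the main thing to get right; the rest is linearity of expectation together with the two cited lemmas.
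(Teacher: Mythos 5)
Your proof is correct, but it takes a genuinely different route from the paper's. The paper never applies Lemma \ref{lem:umguarantee} to the half graph: it applies it once to the full graph $G$ with budget $B'=B/\beta$, and then transfers the guarantee to $\sfM_{1/2}$ by restricting the full-graph greedy matching $\sfM(G)$ on $G(\gamma_{B'}(G))$ to the sampled left vertices. That restriction is available to the half-graph problem only because of the threshold monotonicity of Lemma \ref{lem:monotoneGamma} ($\gamma_{B'}(G_{1/2})\ge\gamma_{B'}(G)$, so those edges survive the pruning of $G_{1/2}$), and the last factor comes from \textsc{Greedy} being a $2$-approximation on $G_{1/2}(\gamma_{B'}(G_{1/2}))$; the paper's factors are $3\times 2\times 2=12$. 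You instead invoke Lemma \ref{lem:umguarantee} realization-by-realization on $G_{1/2}$ itself and lower-bound the half-graph budgeted optimum by projecting a full-graph optimum onto $L_{1/2}$, which needs none of the pruning/monotonicity machinery (Lemmas \ref{lem:monotonegreedymatching} and \ref{lem:monotoneGamma} are not used at all) but imports Lemma \ref{lem:optscaling} into the lemma, whereas the paper reserves it for the final theorem. Your argument is more modular, since it treats the offline $3$-approximation as a black box. One remark: your precaution (ii) is unnecessary --- the projection of $\mathsf{OPT}(B')$ itself has cost at most $B'$ for \emph{every} outcome of the split, since costs are nonnegative and projection only deletes terms --- so the detour through $\mathsf{OPT}(B'/2)$ and Lemma \ref{lem:optscaling} can be skipped, and your argument then yields $\bbE\{u(\sfM_{1/2})\}\ge u(\mathsf{OPT}(B/\beta))/6$, strictly stronger than the stated bound. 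The $o(1)$ issues you flag (non-tight budget in Lemma \ref{lem:umguarantee}, rounding of $|L|/2$) are treated with the same level of informality in the paper's own proof, so they are not a gap specific to your write-up.
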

\begin{proof} Let $B' = \frac{B}{\beta}$. Let $G = (L \cup R, E)$ be the full graph, while $G_{1/2} = (L_{1/2}(\sigma) \cup R, E_{1/2}(\sigma))$, be the graph consisting of only the first half of left vertices that depends on arrival sequence $\sigma$.  


Since $G_{1/2}\subseteq G$, from Lemma \ref{lem:monotoneGamma}, we have $\gamma_{B'}(G_{1/2}) \ge \gamma_{B'}(G)$.
Thus, all the edges of $G(\gamma_{B'}(G))$ that are incident on left vertices $L_{1/2}(\sigma)$ are also present in the pruned graph $G_{1/2}(\gamma_{B'}(G_{1/2}))$. Let the greedy matching over the 'bigger' graph $G(\gamma_{B'}(G))$ be $\sfM(G)$. Let the subset of edges of $\sfM(G)$  that are incident on left vertices belonging to $L_{1/2}(\sigma)$ be $\sfM(G)_{\text{fh}}$.
Let the optimal fractional matching on $G_{1/2}(\gamma_{B'}(G_{1/2}))$ be $\mathsf{OPT}(B')_{1/2}$,
By definition, we have $$u(\mathsf{OPT}(B')_{1/2}) \ge u(\sfM(G)_{\text{fh}}).$$ Since we are considering the uniformly random arrival model for left vertices, i.e., $L_{1/2}$ is obtained by sampling each left vertex of $L$ with probability $\frac{1}{2}$, we have $\bbE\left\{u(\sfM(G)_{\text{fh}})\right\} \ge \frac{u(\sfM(G))}{2}$ and hence
\begin{equation}\label{eq:leftcont}
\bbE\left\{u(\mathsf{OPT}(B')_{1/2})\right\} \ge \frac{u(\sfM(G))}{2}.
\end{equation}

Moreover, since \textsc{Threshold} computes a greedy matching over $G_{1/2}(\gamma_{B'}(G_{1/2}))$, we have $u(\sfM_{1/2}) \ge \frac{u(\mathsf{OPT}(B')_{1/2})}{2}$ for any realization $\sigma$. From Lemma \ref{lem:umguarantee}, we already know that 
$u(\sfM(G)) \ge \frac{\mathsf{OPT}(B')}{3}$. Hence from \eqref{eq:leftcont}, we have $\bbE\{u(\sfM_{1/2})\} \ge u(\mathsf{OPT(B')})/12$.
\end{proof}

\begin{lemma}\label{lem:onguarantee} $\bbE\{u(\sfM_{\mathsf{ON}})\} \ge \bbE\{u(\sfM_{1/2})\}/2$.
\end{lemma}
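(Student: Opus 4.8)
The plan is to show that the decision phase of $\mathsf{ON}$ loses at most a factor of $2$ relative to the offline matching $\sfM_{1/2}$ obtained in the observation phase, by arguing that each edge of $\sfM_{1/2}$ is ``collected'' by $\mathsf{ON}$ with probability at least $1/2$ in expectation. Recall the structure of $\mathsf{ON}$: it fixes $\gamma_{1/2}$ and the values $v(r) = u(e)$ for $e=(\ell,r)\in\sfM_{1/2}$ from the first half, and then, on each arriving left vertex in $L\setminus L_{1/2}$, it prunes edges with $b(e)>\gamma_{1/2}$, selects the heaviest surviving edge whose weight exceeds the current $v(r^\star)$, and adds it if it keeps $\sfM_{\mathsf{ON}}$ a matching. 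I would set up a charging argument: each edge $(\ell,r)\in\sfM_{1/2}$ contributes value $v(r)=u(\ell,r)$, and I want to find, for each right vertex $r$ matched in $\sfM_{1/2}$, an edge in $\sfM_{\mathsf{ON}}$ incident to $r$ of weight at least $v(r)$, occurring with probability $\ge 1/2$.

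First I would observe the key monotonicity/feasibility facts that make the pruning harmless. Since the left vertices arrive in uniformly random order and $L_{1/2}$ is the first half, the partition of $L$ into $L_{1/2}$ and $L\setminus L_{1/2}$ is a uniformly random balanced split; in particular the left endpoint $\ell$ of any fixed edge $(\ell,r)\in\sfM(G)$ (the greedy matching on the full graph $G(\gamma_{B'}(G))$ from Lemma~\ref{lem:offguarantee}) lands in $L\setminus L_{1/2}$ with probability $1/2$, independently of whether we condition appropriately. Next, by Lemma~\ref{lem:monotoneGamma}, $\gamma_{1/2}=\gamma_{B'}(G_{1/2}) \ge \gamma_{B'}(G)$, so any edge surviving the threshold in the ``full'' graph also survives the pruning $b(e)\le\gamma_{1/2}$ in the decision phase; thus pruning never discards an edge that the offline optimal structure on the bigger graph would use. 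This is what lets me compare against $\sfM(G)$ or $\sfM_{1/2}$ rather than against a pruned, weaker object.

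The core step is the ``only heavier edges survive'' argument for the right vertices. Fix $r$ with $v(r)>0$. When a left vertex $\ell$ with a surviving edge $(\ell,r)$ of weight $u(\ell,r) > v(r)$ arrives and $r$ is still unmatched in $\sfM_{\mathsf{ON}}$, then either $\mathsf{ON}$ matches $\ell$ to $r$, or it matches $\ell$ to some other right vertex $r'$ with an even heavier edge — in the latter case $r$ remains available and we keep looking. So $r$ can only fail to get matched in $\sfM_{\mathsf{ON}}$ if \emph{no} qualifying left vertex ever arrives in the second half, or if $r$'s qualifying vertices are all ``stolen'' to fill other right vertices; I would handle the latter by a counting/exchange argument showing that the total weight collected by $\sfM_{\mathsf{ON}}$ dominates what we would have collected by a simpler ``match to $r$ only'' policy, so any steal is compensated. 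The cleanest route: for the left vertex $\ell^\star$ that $\sfM_{1/2}$ matched to $r$ using only first-half vertices, consider the ``swapped'' instance where $\ell^\star$ instead arrives in the second half — by the random-split symmetry this happens with probability exactly $1/2$ conditioned on the rest, and in that event $(\ell^\star,r)$ is a surviving edge with $u(\ell^\star,r)\ge v(r)$ (using that $v(r)$ was computed from a matching that, restricted further, can only have smaller weight — here I invoke Lemma~\ref{lem:monotonegreedymatching} style monotonicity), hence $r$ gets matched in $\sfM_{\mathsf{ON}}$ to an edge of weight $\ge v(r)$. Summing over $r\in\sfM_{1/2}$ and taking expectations yields $\bbE\{u(\sfM_{\mathsf{ON}})\}\ge \tfrac12\,\bbE\{u(\sfM_{1/2})\}$.

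The main obstacle I anticipate is the ``stealing'' interaction: because $\mathsf{ON}$ greedily grabs the heaviest available edge for each arriving $\ell$, a left vertex that could have covered $r$ at value $\ge v(r)$ may instead be assigned to a different right vertex, and one must be careful that this reassignment does not cascade into losing more than half the total. The honest way to close this is a global exchange argument on the final matching $\sfM_{\mathsf{ON}}$ versus $\sfM_{1/2}$ restricted to second-half-arriving left endpoints, showing vertex by vertex (on the right side) that $\sfM_{\mathsf{ON}}$'s weight at $r$ is at least the weight at $r$ in that restricted matching whenever the relevant left vertex fell in the second half — essentially the same monotone-greedy bookkeeping used in Lemma~\ref{lem:monotonegreedymatching}, combined with the independence of the balanced random split. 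I would make sure the conditioning is done per edge so the factor $1/2$ comes out cleanly and the large-market assumption absorbs any $o(1)$ boundary effects from $|L|$ being odd or a right vertex occasionally going unmatched in $\sfM_{1/2}$.
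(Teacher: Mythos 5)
Your proposal has a genuine gap at exactly the point that carries all the difficulty of this lemma. The paper does not reprove anything here: it observes that the decision phase of $\mathsf{ON}$ (values $v(r)$ from the sample's greedy matching, then greedily accepting the heaviest surviving edge with $u(e)>v(r)$) is precisely the \textsc{SampleandPrice} algorithm with $p=1/2$ run on the pruned graph, and then invokes Lemma 2.5 of \cite{KorulaPal}, which is exactly the statement $\bbE\{u(\sfM_{\mathsf{ON}})\}\ge \bbE\{u(\sfM_{1/2})\}/2$. You instead try to prove that statement from scratch, which is legitimate in principle, but the argument you sketch does not go through. The ``swapped instance'' step is circular: the quantities you charge against --- $\sfM_{1/2}$, the values $v(r)$, and the threshold $\gamma_{1/2}$ --- are all functions of which left vertices landed in the first half. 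Conditioning on $(\ell^\star,r)\in\sfM_{1/2}$ forces $\ell^\star\in L_{1/2}$, so the event ``$\ell^\star$ arrives in the second half'' has probability $0$ under that conditioning, not $1/2$; and in the genuinely swapped split, $v(r)$ and $\gamma_{1/2}$ are different random objects, so showing that $r$ gets matched at weight at least the \emph{new} value $v'(r)$ does not let you ``sum over $r$ and take expectations'' against $\bbE\{u(\sfM_{1/2})\}$ computed on the \emph{same} split as $\sfM_{\mathsf{ON}}$. Making such a cross-split coupling work is precisely the content of the Korula--P\'al lemma.

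Second, you correctly identify the ``stealing'' interaction (an arriving left vertex that could cover $r$ at weight $\ge v(r)$ may instead grab a heavier edge to another right vertex, and such reassignments can cascade) as the main obstacle, but you only state an intention to resolve it (``a counting/exchange argument,'' ``a global exchange argument on the final matching''), without carrying it out. That exchange argument is the crux of the lemma; the per-right-vertex monotonicity of Lemma \ref{lem:monotonegreedymatching} (which concerns deleting left vertices from the input of \textsc{Greedy}) does not by itself control the order-dependent, price-based acceptance process of the decision phase. As written, the proposal therefore reduces the lemma to an unproved claim that is essentially equivalent to the lemma itself. The clean repair is either to cite Lemma 2.5 of \cite{KorulaPal} after checking, as the paper does, that the decision phase of $\mathsf{ON}$ coincides with \textsc{SampleandPrice} at $p=1/2$ on the pruned graph (your observation that $\gamma_{1/2}\ge\gamma_{B'}(G)$ makes the pruning harmless is useful for that identification), or to actually reproduce that lemma's proof, with the conditioning done on the unordered vertex set rather than on $\sfM_{1/2}$.
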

\begin{proof} Consider the graph $G(\gamma_{B'}(G))$, where $\gamma_{B'}$ is the output threshold of THRESHOLD algorithm with budget $B'$ and graph  $G_{1/2}$. Then the setting of value for each right vertex, and the greedy selection of edges with weights larger than the value of right vertices  in the decision phase of $\mathsf{ON}$ is identical to running SAMPLEANDPRICE algorithm (Algorithm~\ref{alg:sampleandprice}) on graph $G(\gamma_{B'}(G))$ with  $p=\frac{|L_{1/2}|}{|L|} = \frac{1}{2}$, and hence it follows from Lemma 2.5 \cite{KorulaPal} that $\bbE\{u(\sfM_{\mathsf{ON}})\} \ge \bbE\{u(\sfM_{1/2})\}/2$. 
\end{proof}

\begin{algorithm}
\caption{$\mathsf{SAMPLEADPRICE}$ Algorithm}\label{alg:sampleandprice}
\begin{algorithmic}[1]
\State {\bf Input:} $G=(L\cup R,E)$ and $p \in [0,1]$
\State $k \leftarrow Binomial(|L|, p)$
\State Let $L'$ be the first $k$ vertices of $L$
\State $M_1 \leftarrow$ \textsc{Greedy}($G'$), with $G'=(L'\cup R, E')$
\For{each $r \in R$}
	\State Set $v(r) = u(e)$ of the edge $e$ incident to $r$ in $M_1$
\EndFor
\State $M \leftarrow \emptyset $
\For{each subsequent $\ell \in L \backslash L'$}
	\State Let $e = (\ell, r)$ be the highest-weight edge with $u(e) \ge v(r)$
	\If{$M\cup \{e\}$ is a matching}
		\State  $M = M\cup \{e\}$
	\EndIf
\EndFor
\end{algorithmic}
\end{algorithm}

The following theorem is the main result of the paper.
\begin{theorem}
Algorithm $\mathsf{ON}$ is $24\beta$-competitive, satisfies the budget constraint and is truthful.\end{theorem}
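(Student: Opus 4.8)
The plan is to establish the three claims—competitive ratio, budget feasibility, and truthfulness—separately, chaining together the lemmas already proved.

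\textbf{Competitive ratio.} First I would combine Lemma~\ref{lem:onguarantee} and Lemma~\ref{lem:offguarantee} to get $\bbE\{u(\sfM_{\mathsf{ON}})\} \ge \bbE\{u(\sfM_{1/2})\}/2 \ge u(\mathsf{OPT}(B/\beta))/24$. Then I would invoke Lemma~\ref{lem:optscaling} with $\alpha = 1/\beta$ to get $u(\mathsf{OPT}(B/\beta)) \ge \frac{1}{\beta} u(\mathsf{OPT}(B))$, which yields $\bbE\{u(\sfM_{\mathsf{ON}})\} \ge u(\mathsf{OPT}(B))/(24\beta)$. Since $u(\mathsf{OPT}(B))$ is an upper bound on any (integral, truthful) matching under budget $B$, this gives $24\beta$-competitiveness. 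I should add a sentence noting that the raw-bid OPT used here dominates the value achievable by any truthful budget-feasible algorithm, so comparing against it is legitimate.

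\textbf{Budget feasibility.} The payment to a matched left vertex $\ell$ incident on $r^\star$ is $p_\ell = \beta \gamma_{1/2} v(r^\star)$. The key observations: (i) each right vertex is matched at most once in $\sfM_{\mathsf{ON}}$, so $\sum_{\ell \text{ matched}} p_\ell \le \beta \gamma_{1/2} \sum_{r \in R} v(r)$; (ii) the $v(r)$'s are exactly the edge weights of $\sfM_{1/2}$, and since $\gamma_{1/2} = \gamma_{B'}(G_{1/2})$ was chosen to satisfy $\sum_{e \in \sfM_{1/2}} \gamma_{1/2} u(e) \le B' = B/\beta$ (the defining constraint of set $\cA(G_{1/2})$ in \textsc{Threshold}), we get $\gamma_{1/2} \sum_{r} v(r) \le B/\beta$. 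Multiplying by $\beta$ gives $\sum_\ell p_\ell \le B$. I also need to check each individual payment is at least the bid: for the pruned edge $e^\star = (\ell, r^\star)$ we have $b(e^\star) \le \gamma_{1/2}$, i.e. $c_\ell \le \gamma_{1/2} u(e^\star)$, and since $e^\star$ survived the $u(e) > v(r^\star)$ filter, $u(e^\star) > v(r^\star)$; hence $c_\ell \le \gamma_{1/2} u(e^\star)$, but I actually need $c_\ell \le p_\ell = \beta\gamma_{1/2} v(r^\star)$. This requires a small argument: $c_\ell \le \gamma_{1/2} u(e^\star)$ only bounds $c_\ell$ by $\gamma_{1/2}$ times the true weight, not $v(r^\star)$. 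I expect the intended argument is that $\beta v(r^\star) \ge u(e^\star)$ because $v(r^\star) \ge u_{\min}$ and $u(e^\star) \le u_{\max} \le \beta u_{\min}$; then $c_\ell \le \gamma_{1/2} u(e^\star) \le \gamma_{1/2}\beta v(r^\star) = p_\ell$. This is the step I would be most careful about.

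\textbf{Truthfulness.} Here I would use a monotonicity/threshold-payment argument in the style of single-parameter mechanism design. For a fixed arrival order and fixed reports of the other vertices, I claim there is a threshold bid $\theta_\ell$ such that $\ell$ is matched iff $c_\ell \le \theta_\ell$ (roughly, the pruning step keeps an edge $e$ iff $c_\ell \le \gamma_{1/2} u(e)$, and lowering $c_\ell$ only adds surviving edges—using the same "edges removed monotonically" structure from Lemma~\ref{lem:monotonegreedymatching}), and that the payment $p_\ell$ does not depend on $c_\ell$ given that $\ell$ is matched (indeed $p_\ell = \beta\gamma_{1/2}v(r^\star)$ depends only on $\gamma_{1/2}$, which is computed from the first half, and on $v(r^\star)$). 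One subtlety: if $\ell$ is in the first half it is never matched and never paid, so it cannot benefit from misreporting either; and changing $c_\ell$ for a first-half vertex could change $\gamma_{1/2}$, but since such a vertex gets utility $0$ regardless, it has no incentive. For a second-half vertex, $\gamma_{1/2}$ is independent of its report entirely. So the mechanism is a fixed-threshold-with-fixed-price mechanism per left vertex, hence truthful (dominant-strategy incentive compatible for the left vertices, who want to be matched whenever $p_\ell \ge c_\ell$, which holds by the budget-feasibility check above). The main obstacle I anticipate is making the threshold-characterization of the matching decision fully rigorous—specifically arguing that decreasing $c_\ell$ cannot cause $\ell$ to become \emph{unmatched} (the greedy matching interaction with other vertices), for which I would lean on the monotone-removal property and the fact that other vertices' decisions are unaffected by $\ell$'s bid in the decision phase order.
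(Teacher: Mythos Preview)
Your proposal is correct and matches the paper's approach: the $24\beta$ bound is obtained by chaining Lemmas~\ref{lem:optscaling}, \ref{lem:offguarantee}, and \ref{lem:onguarantee} exactly as you describe, and budget feasibility and truthfulness are packaged as two short standalone lemmas whose arguments coincide with your sketches (including the step $\beta v(r^\star)\ge u(e^\star)$ from the definition of $\beta$, and an explicit appeal to Myerson's characterization in place of your equivalent threshold/posted-price framing). Your anticipated obstacle dissolves once you observe that lowering $c_\ell$ only restores \emph{low}-utility edges after pruning, so the argmax edge $e^\star$---and with it the match and the payment---is unchanged, while the decisions for second-half vertices arriving before $\ell$ are independent of $\ell$'s bid.
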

\begin{proof} The $24\beta$-competitiveness of $\mathsf{ON}$ follows from combining Lemma  \ref{lem:optscaling},\ref{lem:offguarantee}, and \ref{lem:onguarantee}. The budget feasibility and truthfulness are shown in Lemma \ref{lem:budfeas} and \ref{lem:ic}, respectively.
\end{proof}
\begin{lemma}\label{lem:budfeas}
Algorithm $\mathsf{ON}$ satisfies the payment budget constraint, and payment $p_{\ell} \ge c_{\ell}$ for any selected left vertex $\ell$.
\end{lemma}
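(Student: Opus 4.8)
The plan is to dispatch the two assertions separately: first individual rationality ($p_\ell \ge c_\ell$ for every matched $\ell$), which is a local check at the moment $\ell$ is matched, and then the global budget bound, which follows by summing the payments over the matched right vertices and invoking the budget inequality built into \textsc{Threshold}. Both steps are essentially bookkeeping that confirm the two design parameters — the deflated budget $B'=B/\beta$ handed to \textsc{Threshold} in Step 3 of $\mathsf{ON}$, and the inflation factor $\beta$ sitting in the payment $p_\ell=\beta\gamma_{1/2}v(r^\star)$ — are calibrated so the $\beta$'s cancel.

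For the first claim, fix a matched left vertex $\ell$, matched via the edge $e^\star=(\ell,r^\star)$ chosen in the decision phase. Two facts hold by construction of $\mathsf{ON}$. Because $e^\star$ survived pruning, $b(e^\star)=\frac{c_\ell}{u(e^\star)}\le \gamma_{1/2}$, i.e. $c_\ell\le \gamma_{1/2}\,u(e^\star)$. Because $v(r^\star)=u(e')$ for the edge $e'$ incident to $r^\star$ in $\sfM_{1/2}$, the value $v(r^\star)$ is itself the utility of some edge of $G$, hence $\frac{u(e^\star)}{v(r^\star)}\le \frac{u_{max}}{u_{min}}\le \beta$, so $u(e^\star)\le \beta\,v(r^\star)$. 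Chaining, $c_\ell\le \gamma_{1/2}\,u(e^\star)\le \beta\gamma_{1/2}\,v(r^\star)=p_\ell$, which is exactly the payment $\mathsf{ON}$ makes; this is the one place the factor $\beta$ in the payment rule is needed.

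For the budget claim, I would note that $\sfM_{\mathsf{ON}}$ is a matching, so each right vertex $r$ lies in at most one edge of $\sfM_{\mathsf{ON}}$, and every edge $(\ell,r^\star)$ that $\mathsf{ON}$ adds satisfies $u(e^\star)>v(r^\star)$, which forces $v(r^\star)>0$, so $r^\star$ was matched by \textsc{Threshold} in the observation phase. Hence
\[
\sum_{(\ell,r)\in \sfM_{\mathsf{ON}}} p_\ell \;=\; \beta\gamma_{1/2}\sum_{(\ell,r)\in \sfM_{\mathsf{ON}}} v(r)\;\le\; \beta\gamma_{1/2}\sum_{r\in R} v(r)\;=\;\beta\gamma_{1/2}\sum_{e\in\sfM_{1/2}} u(e)\;=\;\beta\gamma_{1/2}\,u(\sfM_{1/2}).
\]
Now $\sfM_{1/2}$ and $\gamma_{1/2}=\gamma_{B'}(G_{1/2})$ are produced by \textsc{Threshold} with budget $B'=B/\beta$, and Step 3 of \textsc{Threshold} guarantees $\sum_{e\in\sfM_{1/2}}\gamma_{1/2}\,u(e)\le B'$, i.e. $\gamma_{1/2}\,u(\sfM_{1/2})\le B'$. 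Substituting gives total payment $\le \beta B' = \beta\cdot\frac{B}{\beta}=B$.

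I do not anticipate a real obstacle: the statement is a consistency check on the parameter choices rather than a substantive argument. The only subtleties worth spelling out carefully are that $v(r^\star)$ is itself an edge utility (so it is within a factor $\beta$ of $u(e^\star)$), and that the standing assumption "$v(r)>0$ for all $r\in R$" — i.e. all right vertices matched in the observation phase — is what lets me pass from $\sum_{(\ell,r)\in\sfM_{\mathsf{ON}}} v(r)$ to $u(\sfM_{1/2})$ cleanly; if some right vertices had $v(r)=0$ one would simply restrict the sum to those actually used, with the same conclusion.
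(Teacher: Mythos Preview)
Your proof is correct and follows essentially the same route as the paper: both chain the pruning inequality $c_\ell\le\gamma_{1/2}u(e^\star)$ with the ratio bound $u(e^\star)\le\beta v(r^\star)$ to obtain individual rationality, and both sum the payments, invoke the \textsc{Threshold} guarantee $\gamma_{1/2}\,u(\sfM_{1/2})\le B'$, and cancel $\beta$ against $B'=B/\beta$ for the budget bound. One small slip: the condition $u(e^\star)>v(r^\star)$ does \emph{not} by itself force $v(r^\star)>0$ (an unmatched $r^\star$ with $v(r^\star)=0$ would still satisfy it), but since you immediately invoke the paper's standing assumption that every right vertex is matched in the observation phase, this does not affect the argument.
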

\begin{proof}  Let $\gamma_{1/2} = \gamma_{B'}(G_{1/2})$ for simplicity. For a selected left vertex $\ell$, its buck per bang $\frac{c_{\ell}}{u(e)}\le \gamma_{1/2}$ and $u(e) > v(r)$, where $e = (\ell, r)$ is the selected edge. From the definition of $\beta$,  $\beta v(r) \ge u(e)$. Thus, $p_{\ell} = \gamma_{B'} \beta  v(r) \ge  \gamma_{1/2} u(e)$ and hence $p_{\ell} \ge c_{\ell}$.

From the definition of algorithm \textsc{Threshold}, we know that the output threshold $\gamma_{B'} \in \cA(G)$ for the offline phase (first half of left vertices) of $\mathsf{ON}$, and hence
\begin{equation}\label{eq:umbf}
\gamma_{1/2} \sum_{e \in \sfM_{1/2}}u(e) \le B'.
\end{equation}
Since the value of any right vertex $r$ is $v(r) = u(e)$ where $e = (\ell, r) \in  \sfM_{1/2}$. Therefore, from \eqref{eq:umbf}, we have that 
\begin{equation}\label{eq:sumvalue}
\sum_{r: e=(\ell, r) \in \sfM_{1/2}} v(r) \le \frac{B'}{\gamma_{1/2}}.
\end{equation}
Clearly, in the decision phase of $\textsf{ON}$, at most one left vertex is selected for each right vertex, and the payment made is $p_{\ell} = \beta \gamma_{B'} v(r)$ if $\ell$ is matched, and $p_{\ell} = 0$ otherwise. Thus, the total payment made 
$$\sum_{\ell, e=(\ell, r) \in \sfM_{\text{on}}} p_{\ell} \le \sum_{r, e=(\ell, r) \in \sfM_{1/2}}\beta \gamma_{1/2} v(r).$$
Thus, from \eqref{eq:sumvalue}, we get that 
$\sum_{\ell, e=(\ell, r) \in \sfM_{\text{on}}} p_{\ell} \le B$.
\vspace{-0.1in}
\end{proof}

Next, we show the most important property of $\mathsf{ON}$, its truthfulness. Towards that end, we will use the  Myerson's Theorem \cite{myerson1981optimal}.
\begin{theorem}\cite{myerson1981optimal}\label{Myerson_Theorem}
A reverse auction is truthful if and only if:
\begin{itemize}
\item The selection rule is monotone. If a user $\ell$ wins the auction by bidding $c_{\ell}$, it would also win the auction by bidding an amount $c_{\ell}'$, where $c_{\ell}' < c_{\ell}$.
\item Each winner is paid a critical amount. If a winning user submits a bid greater than this critical value, it will not get selected.
\end{itemize}
\end{theorem}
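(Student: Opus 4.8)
The plan is to prove the two directions of the equivalence separately, after reducing to the single-parameter, dominant-strategy setting that the statement tacitly assumes. First I would fix one user $\ell$ and freeze the reported costs of all other users; since truthfulness is a dominant-strategy notion, it suffices to reason about $\ell$'s incentives with everything else held constant. With the others fixed, the mechanism induces a selection indicator $x(b)\in\{0,1\}$ (whether $\ell$ is selected when it reports $b$) and a payment $p(b)$ it receives if selected, so that $\ell$'s quasi-linear utility when its true cost is $c$ and it reports $b$ is $x(b)\bigl(p(b)-c\bigr)$. In this notation truthfulness is precisely the family of inequalities $x(c)\bigl(p(c)-c\bigr)\ge x(b)\bigl(p(b)-c\bigr)$ for all reports $b,c$, and the two bulleted properties are (i) $x(\cdot)$ non-increasing in the bid, and (ii) the payment to a winner equals the threshold $\theta_\ell$ above which $\ell$ stops being selected.

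For the ``only if'' direction I would first extract monotonicity purely from the incentive constraints, without invoking any payment structure. Taking two reports $b_1<b_2$, I would write the truthfulness inequality once with true cost $b_1$ (comparing reporting $b_1$ against $b_2$) and once with true cost $b_2$ (comparing $b_2$ against $b_1$), and then add the two. The payment terms $x(b_1)p(b_1)$ and $x(b_2)p(b_2)$ cancel, leaving $x(b_2)(b_1-b_2)\ge x(b_1)(b_1-b_2)$; dividing by the negative quantity $b_1-b_2$ flips the inequality and yields $x(b_1)\ge x(b_2)$. Hence $x(\cdot)$ is non-increasing, which is exactly the stated monotonicity (a winner at a higher bid still wins at any lower bid) and guarantees a well-defined critical threshold $\theta_\ell$ with $\ell$ selected precisely for reports below $\theta_\ell$.

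The core of the argument, and the step I expect to be the main obstacle, is pinning the payment to a winner to the critical value $\theta_\ell$; I would do this in two moves. First I would show the payment is constant across all winning reports: for two winning bids $b,b'$, truthfulness with true cost $b$ gives $p(b)\ge p(b')$ and with true cost $b'$ gives the reverse, so $p(b)=p(b')=:P$. Second I would identify $P$ with $\theta_\ell$ by a pair of one-sided limiting deviation arguments: a user whose true cost $c$ lies just below $\theta_\ell$ must not prefer to overbid and lose (utility $0$), which forces $P-c\ge0$ and hence $P\ge\theta_\ell$ as $c\uparrow\theta_\ell$; symmetrically, a user with true cost $c$ just above $\theta_\ell$ must not prefer to underbid and win, forcing $P-c\le0$ and hence $P\le\theta_\ell$ as $c\downarrow\theta_\ell$. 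The delicate points are the treatment of the boundary (the tie-break at $c=\theta_\ell$, and whether the winning region is open or closed) and the implicit appeal to individual rationality, namely the freedom to report a losing bid and secure utility $0$; I would discharge these by the standard convention that any user can guarantee utility $0$ by reporting a sufficiently high cost, making both one-sided limits legitimate deviations and closing the gap to $P=\theta_\ell$.

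For the ``if'' direction I would verify truthfulness directly from the two properties by a short case check. With threshold $\theta_\ell$ and payment equal to $\theta_\ell$ for every winning report, a user with true cost $c<\theta_\ell$ earns $\theta_\ell-c\ge0$ by reporting truthfully, while any overbid loses (utility $0$) and every other winning bid pays the same $\theta_\ell$, so no deviation helps; a user with $c>\theta_\ell$ earns $0$ truthfully, and the only alternative outcome is to underbid and win at $\theta_\ell-c<0$, which is strictly worse. Thus truthful reporting is dominant in every case, which together with the necessity direction establishes the claimed equivalence.
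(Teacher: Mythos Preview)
The paper does not prove this theorem at all; it is quoted verbatim from Myerson's 1981 paper as a known characterization and then invoked as a black box in Lemma~\ref{lem:ic}. So there is no ``paper's own proof'' to compare against.

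That said, your argument is the standard and correct proof of the single-parameter Myerson characterization in the procurement setting. The monotonicity extraction via adding the two incentive inequalities, the payment-constancy step across winning bids, the squeeze $P=\theta_\ell$ from limiting deviations on either side of the threshold, and the direct case check for sufficiency are all sound. The only caveats you already flag---tie-breaking at the boundary and the implicit individual-rationality assumption that a sufficiently high report guarantees utility zero---are the usual ones and are handled by the conventions you invoke. Nothing is missing.
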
 

\begin{lemma}\label{lem:ic}
$\mathsf{ON}$ is a truthful online algorithm.
\end{lemma}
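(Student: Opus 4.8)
The plan is to verify the two conditions of Myerson's Theorem (Theorem~\ref{Myerson_Theorem}) for the decision phase of $\mathsf{ON}$, treating the first half $L_{1/2}$ separately since those vertices are never matched and hence trivially have no incentive to misreport. So fix a left vertex $\ell \in L\setminus L_{1/2}$ and condition on the arrival order and on all reports of the other left vertices; in particular the threshold $\gamma_{1/2}$, the sampled matching $\sfM_{1/2}$, and all values $v(r)$ are fixed independently of $c_\ell$ (this is the key structural point: $\ell$'s bid influences nothing in the observation phase, and the pruning/selection only uses $u(e)$ and the comparison $b(e)\le\gamma_{1/2}$).

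First I would establish monotonicity of the selection rule. Note that the set of edges incident to $\ell$ that survive pruning is exactly $\{e=(\ell,r): c_\ell/u(e)\le \gamma_{1/2}\} = \{e : c_\ell \le \gamma_{1/2}u(e)\}$, which only grows as $c_\ell$ decreases; the choice of $e^\star$ (highest-weight surviving edge with $u(e)>v(r)$) and the final matching test are then determined by this surviving set together with data independent of $c_\ell$. So if $\ell$ is matched at bid $c_\ell$ via edge $e^\star=(\ell,r^\star)$, then at any bid $c_\ell' < c_\ell$ the edge $e^\star$ still survives pruning (and every edge that survived before still survives), so the same $e^\star$ is available; I need to argue $\ell$ is still matched --- here one must be slightly careful, because lowering $c_\ell$ could cause a \emph{different}, higher-weight edge of $\ell$ to now survive and be chosen as $e^\star$, possibly landing on an already-occupied right vertex. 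I would handle this by observing that the decision-phase matching, restricted to \emph{other} left vertices, is completely unaffected by $c_\ell$ (their edges and the greedy order among them do not involve $\ell$), so the set of right vertices occupied by predecessors of $\ell$ is fixed; hence if at bid $c_\ell$ vertex $\ell$ grabbed a free right vertex $r^\star$, then at bid $c_\ell'$ the (possibly new, weakly heavier) chosen edge $e^\star{}'$ is still incident on a free right vertex, because $r^\star$ itself remains free and is still a candidate. Making this precise --- that the argmax can only ``improve'' and stays within the free part of $R$ --- is the main obstacle, and I expect it to require spelling out that the free right vertices available to $\ell$ and the weights $u(e)$ are $c_\ell$-independent, so the feasibility of being matched is a monotone (upward-closed in ``edges available'') event.

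Second I would identify the critical value and check the payment equals it. With everything else fixed, $\ell$ is matched (and to which $r$) as a function of $c_\ell$ alone; by the monotonicity just shown there is a threshold $\hat c_\ell$ such that $\ell$ is selected iff $c_\ell \le \hat c_\ell$. I claim $\hat c_\ell = \beta\gamma_{1/2}v(r^\star) = p_\ell$ where $r^\star$ is the right vertex $\ell$ is matched to. The payment rule sets $p_\ell = \beta\gamma_{1/2}v(r^\star)$; by Lemma~\ref{lem:budfeas} we already know $p_\ell\ge c_\ell$, so it remains to show that bidding any $c_\ell' > \beta\gamma_{1/2}v(r^\star)$ gets $\ell$ rejected. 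Intuitively, raising the bid above $\beta\gamma_{1/2}v(r^\star)$ prunes every edge $e=(\ell,r)$ with $u(e)\le \beta v(r^\star)$, in particular (using $u(e)\le\beta v(r) $ is \emph{not} quite what is needed --- rather one uses that the matched edge had $u(e^\star)\le\beta v(r^\star)$ by definition of $\beta$, and any surviving edge needs $u(e)>v(r)\ge v_{\min}$)\dots I would argue that once $c_\ell' > \beta\gamma_{1/2}v(r^\star)$, no surviving edge of $\ell$ can simultaneously satisfy $c_\ell'\le\gamma_{1/2}u(e)$ and $u(e)>v(r)$ while $\ell$ remains matchable, because the only right vertices $\ell$ could seize are those free after its predecessors, and for each the value lower-bounds the required weight. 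Concretely, if $c_\ell' > \beta\gamma_{1/2}v(r^\star)$ then surviving requires $u(e) > \beta v(r^\star) \ge v(r^\star)$, and I would show $v(r^\star)$ is the smallest value among the right vertices $\ell$ can feasibly grab in this realization (this needs the greedy structure), so no surviving edge clears its value bar --- hence rejection. This last monotone-in-$c_\ell$ characterization of exactly when $\ell$ wins, and the matching of its kink to the payment formula, is the crux; the rest is bookkeeping, and I would conclude by invoking Theorem~\ref{Myerson_Theorem}.
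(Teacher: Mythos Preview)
Your overall strategy matches the paper's: invoke Myerson's theorem and verify monotonicity and the critical-payment condition. However, both of your verifications contain gaps that the paper's (very short) argument avoids.

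For monotonicity, you worry that lowering $c_\ell$ could let a strictly heavier edge of $\ell$ survive pruning and become the new $e^\star$, possibly landing on an occupied right vertex. But this cannot happen: the surviving edges are exactly $\{e : u(e) \ge c_\ell/\gamma_{1/2}\}$, i.e.\ all edges \emph{above} a weight threshold; lowering $c_\ell$ only lowers this threshold and hence can only add \emph{lighter} edges. The argmax $e^\star$ (and in particular the right vertex $r^\star$) is therefore unchanged when the bid is lowered. Your fallback discussion (``$r^\star$ remains free and is still a candidate'') is both unnecessary and, as written, would not close the gap anyway, because the algorithm does \emph{not} try a second-best edge if the argmax lands on an occupied vertex. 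The paper simply notes that lowering the bid decreases every $b(e)$, which suffices once one sees the observation above.

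For the critical payment, your sketch goes astray. You propose to show that $v(r^\star)$ is the smallest value among right vertices $\ell$ can feasibly grab, but there is no reason this should hold (the argmax is taken over utility, not over value), and even if it did, it would not imply that ``no surviving edge clears its value bar.'' The paper's argument is different and direct: since $u_{\max}/u_{\min}\le\beta$ and $v(r^\star)$ is itself a utility, one has $u(e^\star)\le\beta\, v(r^\star)$; hence $c_\ell>\beta\gamma_{1/2}v(r^\star)$ forces $c_\ell>\gamma_{1/2}u(e^\star)$ and $e^\star$ is pruned. But $e^\star$ was by construction the \emph{heaviest} edge of $\ell$ among all candidates (edges to currently unmatched right vertices with $u(e)>v(r)$), so every other candidate has smaller utility and is pruned as well; thus $\ell$ cannot be matched. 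This two-line combination of the $\beta$-ratio bound with the argmax property is the key idea missing from your proposal.
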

\begin{proof} As stated before, the considered problem is a special case of a reverse auction. Thus to ensure that $\mathsf{ON}$ is truthful, we show that both the conditions of Theorem \ref{Myerson_Theorem} are satisfied. The monotone condition is easy to check, since in the decision phase, if any left vertex reduces its bid, then clearly its buck per bang $b(e)$ decreases, and hence it is still accepted if it was accepted before.

The second condition of payment being critical is also satisfied, shown as follows. Note that the payment made by $\mathsf{ON}$ to a selected left vertex $\ell$ is $p_{\ell} = \beta \gamma_{B'}(G_{1/2}) v(r), e =(\ell,r) \in \sfM_{on}$, where the right vertex index $r$ is such that utility $u(e), e=(\ell, r)$ is largest among the unmatched right vertices at the time of arrival of vertex $\ell$ that have an edge to left vertex  $\ell$, and $u(e) > v(r)$. 

Recall that  
$\frac{u_{max}}{u_{min}} \le \beta$, hence $\frac{u(e)}{v(r)} \le \beta$, where $e =(\ell,r)$. 
Now, if suppose the bid $c_{\ell}$ of left vertex $\ell$ is more than $p_{\ell}=\beta \gamma_{B'}(G_{1/2}) v(r)$, then since $u(e) \le \beta v(r)$, buck per bang of left vertex $\ell$, $c_{\ell} /u(e) >  \gamma_{B'}(G_{1/2})$. Moreover, since $u(e) > u(e')$ for all edges $e'$ incident on unmatched right vertices from $\ell$ at the arrival of left vertex $\ell$, we have that $c_{\ell}/u(e') >  \gamma_{B'}(G_{1/2})$. 
Thus, all edges out of left vertex $\ell$ incident on currently unmatched right vertices are removed in the pruning stage of the decision phase, and hence vertex $\ell$ cannot be selected.
\end{proof}

%
%
%
%

\bibliographystyle{unsrt}
\bibliography{onlined2d}

\end{document}